\newtheorem{definition}{Definition}
\newtheorem{theorem}{Theorem}
\newtheorem{lemma}{Lemma}
\def\qed{\rule{0.4em}{1.4ex}}
\newcommand{\G}{{\mathcal G}}
\newcommand{\init}{\mathit{in}}
\newcommand{\AP}{\mathsf{AP}}
\newcommand{\lab}{{\mathcal L}}
\newcommand{\ov}{\overline}
\newcommand{\pat}{\omega}
\newcommand{\stra}{\pi}
\newcommand{\Stra}{\Pi}
\newcommand{\obciach}{\upharpoonright}
\def\set#1{\ensuremath{\{#1\}}}
\newcommand{\slopefrac}[2]{\leavevmode\kern.1em
  \raise .5ex\hbox{\the\scriptfont0 #1}\kern-.1em
  /\kern-.15em\lower .25ex\hbox{\the\scriptfont0 #2}}
\begin{document}
\title{The Complexity of Coverage\thanks{This
research was suppored in part by the NSF grants
CCR-0132780 and CNS-0720884.}}

\author{Krishnendu Chatterjee$^\S$ \qquad
  Luca de Alfaro$^{\S}$ \qquad
  Rupak Majumdar$^{\dag}$\\[5pt]
\normalsize
  $\strut^\S$ CE, University of California, Santa Cruz, USA\\
\normalsize
  $\strut^\dag$ CS,
  University of California, Los Angeles, USA\\
\normalsize
  \texttt{$\{$c\_krish,luca\}@soe.ucsc.edu, rupak@cs.ucla.edu}
}

\date{
}

\maketitle

\newif
  \iflong
  \longfalse
\newif
  \ifshort
  \shorttrue

\thispagestyle{empty}

\begin{abstract}
We study the problem of generating a test sequence that achieves maximal 
coverage for a reactive system under test.
We formulate the problem as a repeated game between the tester and the system, 
where the system state space is partitioned according to some coverage 
criterion and the objective of the tester is to maximize the set of partitions 
(or coverage goals) visited during the game.
We show the complexity of the maximal coverage problem for non-deterministic 
systems is PSPACE-complete, but is NP-complete for deterministic systems.
For the special case of non-deterministic systems with a re-initializing 
``reset'' action, which represent running a new test input on a re-initialized 
system, we show that the complexity is again co-NP-complete.
Our proof technique for reset games uses randomized testing strategies that
circumvent the exponentially large memory requirement in the deterministic case.
\end{abstract}

\section{Introduction}

Code coverage is a common metric in software and hardware testing
that measures the degree to which an implementation has been tested
with respect to some criterion.
In its simplest form, 
one starts with a model of the program, 
and a partition of the behaviors of the model
into {\em coverage goals} \cite{AmmannOffutt}. 
A {\em test} is a sequence of inputs that determines a behavior of
the program. 
The aim of testing is to explore as many coverage goals
as possible, ideally as quickly as possible.
In this paper, we give complexity results for several coverage
problems. 
The problems are very basic in nature: they consist in deciding
whether a certain level of coverage can be attained in a given
system. 
It is thus somewhat surprising that the problems have not been
considered previously in the literature. 

Finite-state directed graphs have been used as program
models for test generation of reactive systems
for a long time (see \cite{LeeYannakakis,Brinksma} for surveys).
A coverage goal is a partition of the states of the graph, and
a test is a sequence of labels that determine a path in the graph.
The maximal coverage test generation problem is to hit as 
many partitions as possible using a minimum number of tests.
In the special case the partitions coincide with the states, 
the maximal coverage problem reduces to the Chinese postman problem
for which there are efficient (polynomial time) algorithms
\cite{EdmondsJohnson}. 
In this paper, we show that the maximal coverage problem becomes
NP-complete for graphs with general partitions. 
We also distinguish between {\em system complexity} (the complexity
of the problem in terms of the size of the graph) and the {\em coverage
complexity} (the complexity of the problem in terms of the number of
coverage goals).
Then, the problem is NLOGSPACE in the size of the graph (but
that algorithm uses space polynomial in the number of propositions).

We consider the special case where the graph has a special ``reset'' action
that takes it back to the initial state.
This corresponds in a testing setting to the case where the system can
be re-initialized before running a test.
In this case, the maximal coverage problem remains polynomial, even with
general partitions. 

Directed graphs form a convenient representation for deterministic systems,
in which all the choices are under the control of the tester.
Testing of non-deterministic systems in which certain actions are 
controllable (under the control of the tester) and other actions
are uncontrollable lead to {\em game graphs} \cite{Yannakakis04}.
A game graph is a directed labeled graph where the nodes are partitioned
into tester-nodes and system-nodes, and while the tester can choose the
next input at a tester node,  the system non-deterministically chooses
the next state at a system node.
Then, the test generation problem is to generate a test set that
achieves maximal coverage no matter how the system moves.
For general game graphs, we show the complexity of the maximal coverage
problem is PSPACE-complete.
However, there is an algorithm that runs in time linear in the size
of the game graph but exponential in the number of coverage goals.
Again, the re-initializability assumption reduces the complexity of
coverage: in case there is a re-initialization strategy of the
tester from any system state, the maximal coverage problem for games is
co-NP-complete. 
Dually, we show that the problem of whether it is possible to win a
safety game while visiting fewer than a specified number of partitions
is NP-complete. 

Finally, we consider the coverage problem in bounded time, consisting
in checking whether a specified number of partitions can be visited in
a pre-established number of steps. 
We show that the problem is NP-complete for graphs, and is PSPACE-complete
for game graphs. 

Optimization problems arising out of test generation have been studied
before in the context of both graphs and games 
\cite{AlurSTOC95,LeeY96,Yannakakis04,BlassGNV05}.
However, to the best of our knowledge, the complexities of the coverage
problems studied here have escaped attention so far. 

While we develop our theory for the finite-state, discrete case, we can
derive similar results for more general models, such as those incorporating
incomplete information (the tester can only observe part of the system
state) or timing. 
For timed systems modeled as timed automata, the maximal coverage problem
is PSPACE-complete.
For timed games as well as for (finite state) game graphs with incomplete
information, the maximal coverage problem becomes EXPTIME-complete.

\section{Definitions}

In this section we define \emph{labeled graphs} and \emph{labeled games}, and 
then define the two decision problems of coverage, namely,
\emph{maximal coverage} problem and 
\emph{coverage with bounded time} problem.
We start with definition of graphs and games.

\begin{definition}[Labeled graphs]
A \emph{labeled graph} $\G=((V,E), v_{\init},\AP, \lab)$ consists of the following 
component:
\begin{enumerate}
\item A finite directed graph with vertex set $V$ and edge set $E$;
\item the initial vertex $v_{\init}$;
\item a finite set of atomic propositions $\AP$;
\item a labeling function $\lab$ that assigns to each vertex $v$ the 
set $\lab(v)$ of atomic propositions true at $s$. 
\end{enumerate}
For technical convenience we will assume that for all vertices $v \in V$,
there exists $u \in V$ such that $(v,u)\in E$, i.e., each vertex has at least
one out-going edge.
\end{definition}

\medskip\noindent{\bf Paths in graphs and reachability.} 
Given a labeled graph $\G$, a \emph{path} $\pat$ in $\G$ is a infinite 
sequence of vertices  $\langle v_0,v_1,v_2 \ldots \rangle$ starting
from the initial vertex $v_{\init}$ (i.e., $v_0=v_{\init}$) such that 
for all $i \geq 0$ we have $(v_i,v_{i+1}) \in E$.
A vertex $v_i$ is reachable from $v_{\init}$ if there is a path 
$\pat=\langle v_0,v_1,v_2\ldots \rangle $ in $\G$ and 
$j\geq 0$ such that the vertex $v_j$ in $\pat$ is the vertex $v_i$.

\begin{definition}[Labeled game graphs]
A \emph{labeled game graph} $\G=((V,E), (V_1,V_2), v_{\init},\AP, \lab)$ consists of the 
components of a labeled graph along with a partition of the finite vertex
set $V$ into $(V_1,V_2)$.
The vertices in $V_1$ are player~1 vertices where player~1 chooses
outgoing edges, 
and analogously, the vertices in $V_2$ are player~2 vertices where player~2 chooses
outgoing edges. 
Again for technical convenience we will assume that for all vertices $v \in V$,
there exists $u \in V$ such that $(v,u)\in E$, i.e., each vertex has at least
one out-going edge.
\end{definition}

\medskip\noindent{\bf Plays and strategies in games.} A \emph{play} in a game graph
is a path in the underlying graph of the game.
A strategy for a player in a game is a recipe to specify how to extend the
prefix of a play.
Formally, a strategy $\stra_1$ for player~1 is a function
$\stra_1: V^* \cdot V_1 \to V$ that takes a finite sequence of
vertices $w \cdot v$ ending in a player~1 vertex $v$, where $w\in V^*$ and $v\in V_1$, 
representing the history of the play so far, and specifies the next 
vertex $\stra_1(w\cdot v)$ choosing an out-going edge (i.e., 
$(v,\stra_1(w\cdot v)) \in E$.
A strategy $\stra_2:V^* \cdot V_2 \to V$ is defined analogously.
We denote by $\Stra_1$ and $\Stra_2$ the set of all strategies for 
player~1 and player~2, respectively.
Given strategies $\stra_1 $ and $\stra_2$ for player~1 and player~2, 
there is a unique play (or a path) $\pat(v_{\init},\stra_1,\stra_2)
=\langle v_0,v_1,v_2, \ldots$ such that 
(a)~$v_0=v_{\init}$;
(b)~for all $i\geq 0$, if $v_i \in V_1$, then 
$\stra_1(v_0\cdot v_1 \ldots \cdot v_i)=v_{i+1}$;
and if $v_i \in V_2$, then
$\stra_2(v_0\cdot v_1 \ldots \cdot v_i)=v_{i+1}$.

\medskip\noindent{\bf Controllably recurrent graphs and games.}
Along with general labeled graphs and games, we will also consider
graphs and games that are \emph{controllably recurrent}.
A labeled graph $\G$ is \emph{controllably recurrent} if for every vertex
$v_i$ that is reachable from $v_{\init}$, there is a path starting
from $v_i$ that reaches $v_{\init}$.
A labeled game graph $\G$ is \emph{controllably recurrent} if for every vertex
$v_i$ that is reachable from $v_{\init}$ in the underlying graph, 
there is a strategy $\stra_1$ for player~1 such that against all player~2
strategies $\stra_2$, the path starting from $v_i$ given the
strategies $\stra_1$ and $\stra_2$ reaches $v_{\init}$.
Controllable recurrence models the natural requirement that systems under
test are {\em re-initializable}, that is, from any reachable state of the system,
there is always a way to bring the system back to its initial state no matter
how the system behaves.

\medskip\noindent{\bf The maximal coverage problem.}
The \emph{maximal coverage problem} asks whether at least 
$m$ different propositions can be visited.
We now define the problem formally for graphs and games.
Given a path $\pat=\langle v_0,v_1,v_2,\ldots\rangle$, let 
$\lab(\pat)=\bigcup_{i \geq 0} \lab(v_i)$ be the set of
propositions that appear in $\pat$.
Given a labeled graph $\G$ and $0\leq m \leq |\AP|$, 
the maximal coverage problem asks whether there
is path $\pat$ such that $|\lab(\pat)| \geq m$.
Given a labeled game graph $\G$ and $0\leq m \leq |\AP|$,
the maximal coverage problem asks whether 
player~1 can ensure that at least $m$ propositions 
are visited, i.e.,
whether 
\[
\sup_{\stra_1 \in \Stra_1} \inf_{\stra_2 \in \Stra_2} 
|\lab(\pat(v_{\init},\stra_1,\stra_2))| \geq m.
\]
It may be noted that
$\sup_{\stra_1 \in \Stra_1} \inf_{\stra_2 \in \Stra_2} 
|\lab(\pat(v_{\init},\stra_1,\stra_2))| \geq m$ 
iff there exists a player~1 strategy $\stra_1^*$ such that
for all player~2 strategies $\stra_2^*$ we have 
$|\lab(\pat(v_{\init},\stra_1^*,\stra_2^*))| \geq m$.

The maximal {\em state coverage} problem is the special case
of the maximal coverage problem where $\AP = V$ and 
for each $v\in V$ we have $\lab(v) = \set{v}$. That is,
each state has its own label, and there are $|V|$ singleton partitions.

\medskip\noindent{\bf The coverage with bounded time problem.}
The \emph{coverage with bounded time problem} asks whether 
at least $m$ different propositions can be visited within 
$k$-steps.
We now define the problem formally for graphs and games.
Given a path $\pat=\langle v_0, v_1,v_2,\ldots \rangle$ and $k\geq0$,
we denote by $\pat \obciach k$ the prefix of the path of length $k+1$,
i.e., $\pat \obciach k= \langle v_0,v_1,\ldots,v_k\rangle$.
Given a path $\pat=\langle v_0, v_1,v_2,\ldots \rangle$ and $k\geq0$,
we denote by $\lab(\pat \obciach k) =\bigcup_{0\leq i \leq k} \lab(v_i)$.
Given a labeled graph $\G$ and $0\leq m \leq |\AP|$ and $k\geq 0$,
the coverage with bounded time problem asks whether there
is path $\pat$ such that $|\lab(\pat \obciach k)| \geq m$.
Given a labeled game graph $\G$ and $0\leq m \leq |\AP|$,
the maximal coverage problem asks whether 
player~1 can ensure that at least $m$ propositions 
are visited within $k$-steps, i.e.,
whether 
\[
\sup_{\stra_1 \in \Stra_1} \inf_{\stra_2 \in \Stra_2} 
|\lab(\pat(v_{\init},\stra_1,\stra_2) \obciach k)| \geq m.
\]
It may be noted that
$\sup_{\stra_1 \in \Stra_1} \inf_{\stra_2 \in \Stra_2} 
|\lab(\pat(v_{\init},\stra_1,\stra_2)\obciach k )| \geq m$ 
iff there exists a player~1 strategy $\stra_1^*$ such that
for all player~2 strategies $\stra_2^*$ we have 
$|\lab(\pat(v_{\init},\stra_1^*,\stra_2^*)\obciach k )| \geq m$.

\newcommand{\Sys}{\mathcal{S}}
\newcommand{\Alpha}{\Sigma}

\medskip\noindent{\bf System-tester game.} 
A \emph{system} $\Sys=(Q,\Alpha,q_{\init},\Delta,\AP,\lab)$ consists of 
the following components:
\begin{itemize}
\item A finite set $Q$ of states with the starting state $q_{\init}$.
\item A finite alphabet $\Alpha$ of input letters.
\item A transition relation $\Delta \subseteq Q\times \Sigma \times Q$.
\item A finite set of atomic propositions $\AP$ and a labeling function $\lab$
that assigns to each state $q$ the set of atomic propositions true at $q$.
\end{itemize} 
We consider \emph{total} systems such that for all $q \in Q$ and $\sigma \in
\Alpha$, there exists $q' \in Q$ such that $(q,\sigma,q')\in \Delta$.
A system is \emph{deterministic} if for all $q \in Q$ and $\sigma \in \Alpha$,
there exists exactly one $q'$ such that $(q,\sigma,q')\in \Delta$.
The tester selects an input letter at every stage and the system resolves
the non-determinism in transition to choose the successor state.
The goal of the tester is to visit as many different propositions as possible.
The interaction between the system and the tester can be reduced to a  
labeled game graph $\G=((V,E),(V_1,V_2),v_{\init},\AP,\lab')$ as follows:
\begin{itemize}
\item \emph{Vertices and partition.}
 $V=Q \cup Q\times \Alpha$;  $V_1=Q$ and $V_2=Q\times \Alpha$; and 
 $v_{\init}=q_{\init}$.

\item \emph{Edges.} 
$E=\set{(q,(q,\sigma)) \mid q\in Q, \sigma \in \Alpha} 
\cup \set{((q,\sigma),q') \mid (q,\sigma,q') \in \Delta}$.

\item \emph{Labeling.} 
$\lab'(q)=\lab(q)$ and $\lab'((q,\sigma))=\lab'(q)$.

\end{itemize}
The coverage question for game between tester and system can be 
answered by answering the question in the game graph.
Also observe that if the system is deterministic, then for all player~2 
vertices in the game graph, there is exactly one out-going edge, and hence
the game can be reduced to a labeled graph.
In this paper we will present all the results for the labeled graph
and game model.
All the upper bounds we provide follow also for the game between tester and 
system.
All the lower bounds we present can also be easily adapted to the model 
of the game between system and tester.

\newcommand{\next}{\mathsf{next}}
\newcommand{\nats}{\mathbb{N}}

\section{The Complexity of Maximal Coverage Problems}

In this section we study the complexity of the maximal coverage problem.
In subsection~\ref{subsec:maxexpl:graphs} we study the complexity for 
graphs,
and in subsection~\ref{subsec:maxexpl:games} we study the complexity for 
game graphs.

\subsection{Graphs}
\label{subsec:maxexpl:graphs}
We first show that the maximal coverage problem for labeled
graphs is NP-complete.

\begin{theorem}\label{thrm:maxexpl-graphs}
The maximal coverage problem for labeled graphs is NP-complete.
\end{theorem}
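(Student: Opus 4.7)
The plan is to establish NP membership by a short-witness argument, and NP-hardness by reduction from $3$-SAT.

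For membership, I would first prove a ``compression'' lemma: any infinite path $\pat$ that covers $m$ propositions can be turned into a finite walk in $\G$ of length at most $|\AP|\cdot|V|$ with the same label-set. To do this, for each covered proposition $p$ I would pick the first vertex on $\pat$ whose label contains $p$, giving $\ell \leq |\AP|$ ``milestone'' vertices $u_1,\ldots,u_\ell$ ordered by first appearance. Connecting $v_{\init}$ to $u_1$, and each consecutive pair $u_j,u_{j+1}$, by shortest paths in $\G$ (each of length at most $|V|$, since these pairs are reachable via $\pat$) yields a walk of length at most $|\AP|\cdot|V|$ visiting all milestones. The assumption that every vertex has an outgoing edge then lets me extend the walk to an infinite path of the same coverage. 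An NP algorithm therefore guesses such a walk and verifies in polynomial time that $|\lab(\text{walk})|\geq m$.

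For hardness, I would reduce from $3$-SAT. Given $\phi = C_1 \wedge \cdots \wedge C_k$ over variables $x_1,\ldots,x_n$, I would build a labeled graph with atomic propositions $\AP = \{C_1,\ldots,C_k\}$ as follows: a start vertex $v_{\init}$, and a chain of ``diamond'' gadgets, one per variable, where gadget $i$ has entry $a_i$, literal vertices $t_i$ and $f_i$, and exit $b_i$, with edges $a_i\to t_i\to b_i$ and $a_i\to f_i\to b_i$, the gadgets chained by $v_{\init}\to a_1$ and $b_i\to a_{i+1}$, and terminated by $b_n\to z$ with a self-loop $z\to z$ to respect the outgoing-edge assumption. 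I would label $t_i$ by the clauses satisfied when $x_i := \mathit{true}$ and $f_i$ by the clauses satisfied when $x_i := \mathit{false}$; all other vertices carry empty labels. Any path from $v_{\init}$ then selects exactly one of $\{t_i,f_i\}$ per $i$, encoding a truth assignment whose satisfied clauses are precisely the propositions it visits. Setting $m := k$ yields the desired equivalence with satisfiability, and the construction is clearly polynomial.

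The main obstacle is making the reduction watertight: I must ensure that no path can ``cheat'' by returning to a gadget and picking up both $t_i$ and $f_i$, which would decouple coverage from assignments. The strictly layered DAG structure prevents this, since once the path passes $b_i$ it can never re-enter $a_i$. Beyond that, the remaining ingredients --- the compression bound, the bijection between variable assignments and choices of path through the gadget chain, and the polynomial size of the construction --- are routine to verify.
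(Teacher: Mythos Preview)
Your proposal is correct and follows essentially the same approach as the paper: a polynomial-length witness argument for NP membership (the paper phrases it as removing cycles that add no new proposition, you phrase it as connecting first-occurrence milestones by shortest paths, both yielding an $O(|\AP|\cdot|V|)$ bound), and a reduction from SAT via a chain of per-variable binary-choice gadgets for hardness. The only cosmetic differences are that the paper expands each literal branch into a line of singly-labeled clause vertices and adds a dummy proposition $X$ on the variable vertices (asking for $m+1$ rather than $m$), whereas you collapse each branch to a single multi-labeled vertex and use empty labels elsewhere; neither change affects the argument.
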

\begin{proof}
The proof consists of two parts. We present them below.
\begin{enumerate}
\item \emph{In NP.} The maximal coverage problem is in NP can be proved
as follows. 
Given a labeled game graph $\G$, let $n=|V|$. 
We show first that if there is a path $\pat$ in $\G$ such that
$|\lab(\pat)| \geq m$, then there is a path $\pat'$ in $\G$ such that
$|\lab(\pat' \obciach m\cdot n) | \geq m$.
If $\pat$ visits at least $m$ propositions, and there is a cycle in $\pat$ 
that does not visit a new proposition that is already visited in the prefix,
then the cycle segment can be removed from $\pat$ and still the resulting
path visits $m$ propositions.
Hence if the answer to the maximal coverage problem is ``Yes'', then there 
is a path $\pat'$ of length at most $m \cdot n$ that is a witness to the 
``Yes'' answer.
Since $m \leq |\AP|$, it follows that the problem is in NP.

\item \emph{NP-hardness.} Now we show that the maximal coverage problem 
is NP-hard, and we present a reduction from the SAT-problem.
Consider a SAT formula $\Phi$, and let $X=\set{x_1,x_2,\ldots,x_n}$ be 
the set of variables and $C_1,C_2,\ldots,C_m$ be the set of clauses.
For a variable $x_j \in X$, let 
\begin{enumerate}
\item $\true(x_j)=\set{\ell \mid x_j \in C_{\ell}}$ be the set of indices 
of the set of clauses $C_{\ell}$ that is satisfied if $x_j$ is set to
be true; and
\item $\false(x_j)=\set{\ell \mid \ov{x}_j \in C_{\ell}}$ be the set of 
indices of the set of clauses $C_{\ell}$ that is satisfied if $x_j$ is set to
be false.
\end{enumerate}
Without loss of generality, we assume that $\true(x_j)$ and $\false(x_j)$ 
are non-empty for all $1\leq j \leq n$ (this is because, for example, 
if $\false(x_j)=\emptyset$, then we can set $x_j$ to be true and 
reduce the problem where the variable $x_j$ is not present).
For a finite set $F \subseteq \nats$ of natural numbers, 
let $\max(F)$ and $\min(F)$ denote the maximum and minimum number of
$F$, respectively,
for an element $f \in F$ that is not the maximal element 
let $\next(f,F)$ denote the next highest element to $f$ that belongs to $F$;
i.e., (a)~$\next(f,F) \in F$; (b)~$f < \next(f,F)$; and
(c)~if $j \in F$ and $f < j$, then $\next(f,F) \leq j$.
We construct a labeled game graph $\G^{\Phi}$ as follows. 
We first present an intuitive description: there are states labeled $x_1,x_2,
\ldots,x_n,x_{n+1}$, and all of them are labeled by a single proposition. 
The state $x_{n+1}$ is an absorbing state (state with a self-loop only), 
and all other $x_i$ state has two successors. The starting is $x_1$.
In every state $x_i$ given the right choice we visit in a line a set of
states that are labeled by clauses that are true if $x_i$ is true; and 
given the left choice we visit in a line a set of
states that are labeled by clauses that are true if $x_i$ is false;
and then we move to state $x_{i+1}$.
We now formally describe every component of the labeled graph $\G^{\Phi}=
\langle (V^{\Phi},E^{\Phi}), v_{\init}^{\Phi}, \AP^{\Phi}, \lab^{\Phi} \rangle $.
\begin{figure}[t]
   \begin{center}
      \setlength{\unitlength}{0.00041667in}
\begingroup\makeatletter\ifx\SetFigFont\undefined%
\gdef\SetFigFont#1#2#3#4#5{%
  \reset@font\fontsize{#1}{#2pt}%
  \fontfamily{#3}\fontseries{#4}\fontshape{#5}%
  \selectfont}%
\fi\endgroup%
{\renewcommand{\dashlinestretch}{30}
\begin{picture}(5771,11778)(0,-10)
\put(0,6016){\makebox(0,0)[lb]{{\SetFigFont{9}{10.8}{\rmdefault}{\mddefault}{\updefault}$\ov{x}_{j,\next(i,\false(x_j))}$}}}
\thicklines
\put(3187,11529){\ellipse{438}{438}}
\put(3167,9161){\ellipse{438}{438}}
\put(1987,8454){\ellipse{438}{438}}
\put(4387,8529){\ellipse{438}{438}}
\put(4387,7329){\ellipse{438}{438}}
\put(1975,7305){\ellipse{438}{438}}
\put(1987,6129){\ellipse{438}{438}}
\put(4456,6156){\ellipse{438}{438}}
\put(4360,3649){\ellipse{438}{438}}
\put(2020,4821){\ellipse{438}{438}}
\put(3283,3115){\ellipse{438}{438}}
\put(3262,579){\ellipse{438}{438}}
\path(3000,11416)(2175,11041)
\path(2368.660,11194.935)(2175.000,11041.000)(2418.316,11085.691)
\path(3375,11416)(3975,11041)
\path(3739.680,11117.320)(3975.000,11041.000)(3803.280,11219.080)
\path(3000,9016)(2175,8566)
\path(2356.964,8733.598)(2175.000,8566.000)(2414.426,8628.251)
\path(3375,9091)(4200,8641)
\path(3960.574,8703.251)(4200.000,8641.000)(4018.036,8808.598)
\path(2025,7066)(2025,6316)
\path(1965.000,6556.000)(2025.000,6316.000)(2085.000,6556.000)
\path(4425,7066)(4425,6391)
\path(4365.000,6631.000)(4425.000,6391.000)(4485.000,6631.000)
\path(4200,3541)(3450,3241)
\path(3650.551,3385.842)(3450.000,3241.000)(3695.118,3274.425)
\path(2100,4591)(3075,3241)
\path(2885.842,3400.434)(3075.000,3241.000)(2983.123,3470.692)
\path(4050,9916)(3300,9316)
\path(3449.927,9512.779)(3300.000,9316.000)(3524.890,9419.075)
\path(2279,9878)(3029,9278)
\path(2804.110,9381.075)(3029.000,9278.000)(2879.073,9474.779)
\dashline{90.000}(2025,8116)(2025,7666)
\dashline{90.000}(4425,8191)(4425,7666)
\dashline{90.000}(4425,5791)(4425,3991)
\dashline{90.000}(2025,5791)(2025,5191)
\dashline{90.000}(3225,2416)(3225,1366)
\path(3158,2924)(2333,2549)
\path(2526.660,2702.935)(2333.000,2549.000)(2576.316,2593.691)
\path(3379,2948)(3979,2573)
\path(3743.680,2649.320)(3979.000,2573.000)(3807.280,2751.080)
\path(2339,1235)(3089,635)
\path(2864.110,738.075)(3089.000,635.000)(2939.073,831.779)
\path(4043,1375)(3293,775)
\path(3442.927,971.779)(3293.000,775.000)(3517.890,878.075)
\dashline{90.000}(3150,10891)(3150,9841)
\put(3525,11491){\makebox(0,0)[lb]{{\SetFigFont{9}{10.8}{\rmdefault}{\mddefault}{\updefault}$x_1$}}}
\put(3750,9166){\makebox(0,0)[lb]{{\SetFigFont{9}{10.8}{\rmdefault}{\mddefault}{\updefault}$x_j$}}}
\put(3675,2866){\makebox(0,0)[lb]{{\SetFigFont{9}{10.8}{\rmdefault}{\mddefault}{\updefault}$x_{j+1}$}}}
\put(3675,391){\makebox(0,0)[lb]{{\SetFigFont{9}{10.8}{\rmdefault}{\mddefault}{\updefault}$x_{n+1}$}}}
\put(1050,7216){\makebox(0,0)[lb]{{\SetFigFont{9}{10.8}{\rmdefault}{\mddefault}{\updefault}$\ov{x}_{j,i}$}}}
\put(150,8341){\makebox(0,0)[lb]{{\SetFigFont{9}{10.8}{\rmdefault}{\mddefault}{\updefault}$\ov{x}_{j,\min(\false(x_j))}$}}}
\put(0,4741){\makebox(0,0)[lb]{{\SetFigFont{9}{10.8}{\rmdefault}{\mddefault}{\updefault}$\ov{x}_{j,\max(\false(x_j))}$}}}
\put(4725,8341){\makebox(0,0)[lb]{{\SetFigFont{9}{10.8}{\rmdefault}{\mddefault}{\updefault}$x_{j,\min(\true(x_j))}$}}}
\put(4725,7141){\makebox(0,0)[lb]{{\SetFigFont{9}{10.8}{\rmdefault}{\mddefault}{\updefault}$x_{j,i}$}}}
\put(4725,6016){\makebox(0,0)[lb]{{\SetFigFont{9}{10.8}{\rmdefault}{\mddefault}{\updefault}$x_{j,\next(i,\true(x_j))}$}}}
\put(4650,3391){\makebox(0,0)[lb]{{\SetFigFont{9}{10.8}{\rmdefault}{\mddefault}{\updefault}$x_{j,\max(\true(x_j))}$}}}
\put(3300.000,271.000){\arc{510.000}{5.7932}{9.9147}}
\blacken\path(3583.031,150.516)(3525.000,391.000)(3463.035,151.500)(3583.031,150.516)
\end{picture}
}
   \end{center}
  \caption{The NP-hardness reduction in picture.}
  \label{fig:np-hardness}
\end{figure}

\begin{enumerate}
\item The set of vertices is 
\[
\begin{array}{rcl}
V^{\Phi} & = & \set{x_i \mid 1 \leq i \leq n+1} \\ 
 & \cup & \set{x_{j,i} \mid 1 \leq j \leq n, i \in \true(x_j)} \
\cup  \ \set{\ov{x}_{j,i} \mid 1 \leq j \leq n, i \in \false(x_j)}.
\end{array}
\]
There is a vertex for every variable, and a vertex $x_{n+1}$.
There is a vertex $x_{j,i}$ iff $C_i \in \true(x_j)$,
and 
there is a vertex $\ov{x}_{j,i}$ iff $C_i \in \false(x_j)$,

\item The set of edges is 
\[
\begin{array}{rcl}
E^{\Phi} & = & \set{(x_{n+1},x_{n+1})} 
\cup \set{(x_{j,\max(\true(x_j))},x_{j+1}), (\ov{x}_{j,\max(\false(x_j))},x_{j+1}) 
\mid 1 \leq j \leq n } \\
& \cup &  \set{(x_j,x_{j,\min(\true(x_j))}), (x_j,\ov{x}_{j,\min(\false(x_j))}) 
\mid 1 \leq j \leq n } \\
& \cup & \set{(x_{j,i},x_{j,\next(i,\true(x_j))} \mid 1\leq j \leq n, i < \max(\true(x_j))} \\
& \cup & \set{(\ov{x}_{j,i},\ov{x}_{j,\false(i,\true(x_j))} \mid 1\leq j \leq n, i < \max(\false(x_j))}.
\end{array}
\]
We now explain the role if each set of edges.
The first edge is the self-loop at $x_{n+1}$.
The second set of edges specifies that from 
$x_{j,\max(\true(x_j))}$ the next vertex is $x_{j+1}$
and similarly, from
$\ov{x}_{j,\max(\false(x_j))}$ the next vertex is again $x_{j+1}$.
The third set of edges specifies that from 
$x_j$ there are two successors that are 
$x_{j,i}$ and $\ov{x}_{j,i'}$ where 
$i=\min(\true(x_j))$ and $i'=\min(\false(x_j))$.
The final sets of edges specifies 
(a) to move in a line from $x_{j,\min(\true(x_j))}$ to visit the clauses that
are satisfied by setting $x_j$ as true, and
(b) to move in a line from $\ov{x}_{j,\false(\true(x_j))}$ to visit the clauses that
are satisfied by setting $x_j$ as false.
Fig~\ref{fig:np-hardness} gives a pictorial view of the reduction.

\item The initial vertex is $v_{\init}^{\Phi}=x_1$.

\item $\AP^{\Phi}=\set{C_1,C_2,\ldots,C_m, X}$, i.e., there is a proposition $C_i$
for each clause $C_i$ and there is a proposition $X$  for all variables;

\item $\lab^{\Phi}(x_j)=X$; i.e., every variable state is labeled by the 
proposition $X$; and we have $\lab^{\Phi}(x_{j,i})=C_i$ and
$\lab^{\Phi}(\ov{x}_{j,i})=C_i$, i.e., each state $x_{j,i}$ and $\ov{x}_{j,i}$
is labeled by the corresponding clause that is indexes.

\end{enumerate}
The number of states in $\G^{\Phi}$ is $O(n \cdot m)$,
and the reduction is polynomial in $\Phi$.
In this graph the maximal number of propositions visited is exactly equal 
to the maximal number of satisfiable clauses plus~1 (since along with the
propositions for clauses the proposition $X$ for all variables is always
visited).
The proof of the above claim is as follows.
Given a path $\pat$ in $G^{\Phi}$ we construct an assignment 
$A$ for the variables as follows: if the choice at a vertex 
$x_j$ is $x_{j,\min(\true(x_j))}$, then we set $x_j$ as true in $A$,
else we set $x_j$ as false.
Hence if a path in $\G^{\Phi}$ visits a set $P \subseteq \AP^{\Phi}$ of
$r$ propositions, 
then the assignment $A$ satisfies $r-1$ clauses (namely, $P \setminus \set{X}$).
Conversely, given an assignment $A$ of the variables, we construct a 
path $\pat^A$ in $\G^{\Phi}$ as follows:
if $x_j$ is true in the assignment $A$, then the path $\pat^A$ chooses
$x_{j,\min(\true(x_j))}$ at $x_j$,
otherwise, it chooses
$\ov{x}_{j,\min(\false(x_j))}$ at $x_j$.
If $A$ satisfies a set $Q$ of $r-1$ clauses, then $\pat^A$ visits $r+1$ propositions
(namely, the set $Q \cup \set{X}$ of propositions).
Hence $\Phi$ is satisfiable iff the answer to the maximal coverage problem
with input $\G^{\Phi}$ and $m+1$ is true.
\end{enumerate}
The desired result follows.
\qed
\end{proof}

\noindent{\bf Hardness of approximation.}
We note that from the proof Theorem~\ref{thrm:maxexpl-graphs} it 
follows that the MAX-SAT problem (i.e., computing the maximal number
of clauses satisfiable for a SAT formula) can be reduced to the problem
of computing the exact number for the  maximal coverage problem.
From hardness of approximation of the MAX-SAT 
problem~\cite{AroraLMSS98}, it follows that the maximal 
coverage problem for labeled graphs is hard to approximate.

\begin{theorem}
The maximal coverage problem for labeled graphs that are 
controllably recurrent can be decided in PTIME.
\end{theorem}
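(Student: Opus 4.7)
The plan is to show that in a controllably recurrent graph the maximal coverage is exactly the number of propositions that appear somewhere in the reachable part of the graph, and that this quantity is computable in polynomial time.

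First I would define $R = \bigcup_{v \in V_{\mathit{reach}}} \lab(v)$, where $V_{\mathit{reach}}$ is the set of vertices reachable from $v_{\init}$. One direction is immediate: any path $\pat$ from $v_{\init}$ only visits vertices in $V_{\mathit{reach}}$, so $\lab(\pat) \subseteq R$ and therefore $|\lab(\pat)| \leq |R|$.

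For the matching lower bound I would construct explicitly a single path that covers all of $R$. Enumerate the propositions in $R$ as $p_1,\dots,p_r$ and for each $p_i$ pick a reachable vertex $v_i \in V_{\mathit{reach}}$ with $p_i \in \lab(v_i)$. For each $i$, reachability gives a finite path $\alpha_i$ from $v_{\init}$ to $v_i$, and controllable recurrence gives a finite path $\beta_i$ from $v_i$ back to $v_{\init}$. Concatenating $\alpha_1 \beta_1 \alpha_2 \beta_2 \cdots \alpha_r \beta_r$ yields a finite walk starting and ending at $v_{\init}$ that visits every $v_i$; appending any infinite continuation (which exists because every vertex has at least one outgoing edge) gives an infinite path $\pat^*$ with $\lab(\pat^*) \supseteq R$, hence equal to $R$.

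Putting the two directions together, the maximal coverage value is exactly $|R|$, so the decision problem reduces to checking $|R| \geq m$. The set $V_{\mathit{reach}}$ can be computed by a single depth-first search in linear time, and then $R$ is assembled by unioning the labels of the reachable vertices, giving an overall polynomial-time algorithm. The main step requiring care is just the construction of the witnessing path, to verify that the repeated ``go to $v_i$, return to $v_{\init}$'' pattern is actually realizable, which is precisely what the definition of controllable recurrence guarantees.
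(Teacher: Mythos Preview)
Your proof is correct and follows essentially the same approach as the paper: both reduce the problem to computing the set of propositions appearing on vertices reachable from $v_{\init}$ and checking whether its size is at least $m$. The paper phrases this as computing the maximal strongly connected component containing $v_{\init}$ (which, by controllable recurrence, coincides with your $V_{\mathit{reach}}$), whereas you compute $V_{\mathit{reach}}$ directly and spell out the witnessing path construction explicitly; the underlying argument is the same.
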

\begin{proof}
To solve the maximal coverage problem for labeled graphs 
that are controllably recurrent, we compute the maximal
strongly connected component $C$ that $v_{\init}$ belongs to.
Since the graph is controllably recurrent, all states
that are reachable from $v_{\init}$ belong to $C$.
Hence the answer to the maximal coverage 
problem is ``Yes'' iff 
$|\bigcup_{v \in C} \lab(v)| \geq m$.
The result follows.
\qed
\end{proof}

\subsection{Game graphs}
\label{subsec:maxexpl:games}

\begin{theorem}\label{thrm:maxexpl-games}
The maximal coverage problem for labeled game graphs is PSPACE-complete.
\end{theorem}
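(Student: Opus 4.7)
The plan is to prove PSPACE membership by an alternating polynomial-time procedure, and PSPACE-hardness by a reduction from TQBF.

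For the upper bound, I would work in the \emph{augmented game} whose states are pairs $(v,P)$ with $v \in V$ and $P \subseteq \AP$, where a move to $v'$ updates $P \mapsto P \cup \lab(v')$. Player~1's coverage objective becomes a reachability objective in the augmented game: force the play to reach a pair $(v,P)$ with $|P| \geq m$. The decisive observation is that $P$ is monotone non-decreasing along any play and strictly grows at most $|\AP|$ times. Between two consecutive growths (call this a \emph{phase}) the augmented game projects to a plain reachability game on $V$ whose target is $T_P = \set{v' \in V : \lab(v') \not\subseteq P}$: if player~1 never forced the play into $T_P$, then $P$ would never change and coverage $\geq m$ would never be attained, so any coverage-winning strategy, restricted to the phase, solves the reachability game on $V$ with target $T_P$. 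By the standard attractor argument, that reachability game is won within $|V|$ steps, so if player~1 wins the coverage objective at all, he wins within $|V|\cdot|\AP|$ steps. This polynomial depth bound yields an alternating polynomial-time algorithm: carry $(v,P)$ and a step counter in polynomial space; branch existentially at $V_1$-vertices and universally at $V_2$-vertices; accept as soon as $|P| \geq m$, and reject after $|V|\cdot|\AP|$ steps. Since APTIME $=$ PSPACE, the problem lies in PSPACE.

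For PSPACE-hardness, I would reduce from TQBF. Given a quantified Boolean formula $\Psi = Q_1 x_1 \cdots Q_n x_n\, \Phi$ with $\Phi$ a CNF over clauses $C_1, \ldots, C_m$, I reuse verbatim the vertex, edge, and labelling structure of the chain gadget from the proof of Theorem~\ref{thrm:maxexpl-graphs}; the only change is to place the variable-vertex $x_j$ in $V_1$ when $Q_j = \exists$ and in $V_2$ when $Q_j = \forall$. (The clause-chain vertices have a unique successor, so their ownership is immaterial.) Exactly as in the deterministic proof, a play corresponds to a Boolean assignment $A$ over $x_1,\ldots,x_n$, and the set of visited propositions is $\set{X} \cup \set{C_i : A\text{ satisfies }C_i}$. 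Hence player~1 can force the coverage of at least $m+1$ propositions iff he has a strategy ensuring that every response of player~2 yields an assignment satisfying all clauses, i.e., iff $\Psi$ is true.

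The real work is in the upper bound: without the monotonicity observation that caps winning plays at polynomial depth, the augmented game has $|V|\cdot 2^{|\AP|}$ states and a direct attractor computation would only give an EXPTIME algorithm. The hardness direction is almost immediate once the deterministic reduction of Theorem~\ref{thrm:maxexpl-graphs} is in hand.
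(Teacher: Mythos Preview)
Your proof is correct and follows essentially the same approach as the paper: the PSPACE upper bound via the polynomial depth bound coming from monotonicity of the visited-proposition set, and PSPACE-hardness via the QBF reduction built on the gadget of Theorem~\ref{thrm:maxexpl-graphs}, match the paper's argument. The only cosmetic difference is that the paper routes the upper bound through the bounded-time coverage problem (Theorem~\ref{thrm:pspace-bouexpl}) rather than invoking $\mathrm{APTIME}=\mathrm{PSPACE}$ directly, and it additionally observes that hardness persists when the underlying graph is made strongly connected by adding an edge $x_{n+1}\to x_1$.
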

\begin{proof}
The proof consists of two parts. We present them below.
\begin{enumerate}
\item \emph{In PSPACE.} We argue that the maximal coverage problem for labeled game graph 
can be reduced to the coverage in bounded time problem.
The reason is as follows: in a labeled game graph with $n$ 
vertices, if player~1 can visit
$m$ propositions, then player~1 can visit $m$ propositions within 
at most $m\cdot n$ steps; because player~1 can always play a strategy from
the current position that visits a new proposition that
is not visited and never needs to go through a cycle without visiting
a new proposition unless the maximal coverage is achieved.
Hence it follows that the maximal coverage problems for games reduces
to the coverage in bounded time problem.
The PSPACE inclusion will follow from the result of Theorem~\ref{thrm:pspace-bouexpl}
where we show that the coverage in bounded time problem
is in PSPACE.

\item \emph{PSPACE-hardness.} The maximal coverage problem for 
game graphs is PSPACE-complete, even if the underlying graph is 
strongly connected.
The proof is a reduction from QBF (truth of quantified boolean formulas) that
is known to be PSPACE-complete~\cite{Papadimitriou}, 
and it is a modification of the reduction of 
Theorem~\ref{thrm:maxexpl-graphs}.
Consider a QBF formula
\[
\Phi=\exists x_1. \forall x_2. \exists x_3 \ldots \exists x_n. 
C_1 \land C_2 \land \ldots C_m;
\]
defined on the set $X=\set{x_1,x_2,\ldots,x_n}$ of variables,
and $C_1,C_2,\ldots,C_m$ are the clauses of the formula.
We apply the reduction of Theorem~\ref{thrm:maxexpl-graphs} with
the following modification to obtain the labeled game 
graph $\G^{\Phi}$:
the partition $(V_1^{\Phi},V_2^{\Phi})$ of $V^{\Phi}$ is as 
follows.
For a variable $x_j$ if the quantifier before $x_j$ is 
existential, then $x_j \in V_1^{\Phi}$ 
(i.e., for existentially quantified variable, player~1 chooses the
out-going edges denoting whether to set the variable true or false);
and for a variable $x_j$ if the quantifier before $x_j$ is 
universal, then $x_j \in V_2^{\Phi}$ 
(i.e., for universally quantified variable, the opposing player~2 chooses the
out-going edges denoting whether to set the variable true or false).
The state $x_{n+1}$ is a player~2 vertex, and all other vertex has 
an single out-going edges and can be  player~1 state.
Given this game graph we have 
$\Phi$ is true iff player~1 can ensure that all the propositions can be visited
in $\G^\Phi$.
Formally, let $\Stra_1^\Phi$ and $\Stra_2^\Phi$ denote the set 
of all strategies for player~1 and player~2, respectively, in $\G^{\Phi}$.
Then $\Phi$ is true iff 
$\sup_{\stra_1 \in \Stra_1^\Phi}  
\inf_{\stra_2 \in \Stra_2^\Phi} 
|\lab^{\Phi}(\pat(x_1,\stra_1,\stra_2))| \geq m+1$.
Observe that since $x_{n+1}$ is a player~2 state if we add an edge
from $x_{n+1}$ to $x_1$, player~2 will never choose the edge 
$x_{n+1}$ to $x_1$ (since the objective for player~2 is to 
minimize the coverage).
However, adding the edge from $x_{n+1}$ to $x_1$ makes the
underlying graph strongly connected (i.e., the underlying graph of the game
graph becomes controllably recurrent; but player~1 does not have a strategy
to ensure that $x_1$ is reached, so the game is not controllably recurrent).
\end{enumerate}
The desired result follows.
\qed
\end{proof}

\newcommand{\Unif}{\mathsf{Unif}}

\medskip\noindent{\bf Complexity of maximal coverage in controllably recurrent games.}
We will now consider maximal coverage in controllably recurrent games.
Our analysis will use fixing memoryless \emph{randomized} strategy for player~1,
and fixing a memoryless randomized strategy in labeled game graph we get
a labeled Markov decision process (MDP).
A labeled MDP consists of the same components as a labeled game graph, 
and for vertices in $V_1$ (which are randomized vertices in the MDP) 
the successors are chosen uniformly at random (i.e., player~1 does not 
have a proper choice of the successor but chooses all of them uniformly at
random).
Given a labeled game graph $\G=((V,E),(V_1,V_2),v_{\init},\AP,\lab)$ 
we denote by $\Unif(\G)$ the MDP interpretation of $\G$ where player~1 
vertices chooses all successors uniformly at random.
An \emph{end component} in $\Unif(G)$ is a set $U$ of vertices such that
(i)~$U$ is strongly connected and 
(ii)~$U$ is player~1 \emph{closed}, i.e., 
for all $u \in U \cap V_1$, for all $u'$ such that 
$(u,u')\in E$ we have $u' \in U$ (in other words, for all 
player~1 vertices, all the out-going edges are contained in $U$).

\begin{lemma}\label{lemm-contr-mdp}
Let $\G$ be a labeled game graph and let $\Unif(\G)$ be the MDP interpretation
of $\G$.
Then the following assertions hold.
\begin{enumerate}
\item Let $U$ be an end-component in $\Unif(\G)$ with $v_{\init} \in U$.
Then 
$\sup_{\stra_1 \in \Stra_1} \inf_{\stra_2 \in \Stra_2} |\lab(\pat(v_{\init},\stra_1,\stra_2))| 
\leq |\bigcup_{u \in U} \lab(u)|$.

\item There exists an end-component $U \in \Unif(\G)$ with $v_{\init} \in U$
such that $|\bigcup_{u \in U} \lab(u)| \leq \sup_{\stra_1 \in \Stra_1} 
\inf_{\stra_2 \in \Stra_2} |\lab(\pat(v_{\init},\stra_1,\stra_2))| $.
\end{enumerate}
\end{lemma}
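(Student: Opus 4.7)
My plan is to prove the two parts separately. Write $\mathcal{E}$ for the set of end-components of $\Unif(\G)$, and set $\mathcal{E}_0=\{U\in\mathcal{E}\mid v_{\init}\in U\}$ and $W=\sup_{\stra_1}\inf_{\stra_2}|\lab(\pat(v_{\init},\stra_1,\stra_2))|$. Part~1 will establish $W\leq|\bigcup_{u\in U}\lab(u)|$ for every $U\in\mathcal{E}_0$, and Part~2 will exhibit some $U\in\mathcal{E}_0$ satisfying $|\bigcup_{u\in U}\lab(u)|\leq W$; combined they show that $W$ equals $\min_{U\in\mathcal{E}_0}|\bigcup_{u\in U}\lab(u)|$.

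For Part~1, I would construct an explicit player~2 strategy $\stra_2^U$ that traps the play in $U$. Since $U$ is strongly connected in its induced subgraph, every player~2 vertex $u\in U\cap V_2$ has at least one out-edge whose target lies in $U$; let $\stra_2^U$ pick any such edge at $u$. Because $U$ is player~1 closed, every player~1 out-edge from $U\cap V_1$ already stays in $U$. Hence, for any player~1 strategy $\stra_1$ and starting vertex $v_{\init}\in U$, the play $\pat(v_{\init},\stra_1,\stra_2^U)$ never leaves $U$, so $\lab(\pat)\subseteq\bigcup_{u\in U}\lab(u)$; taking $\sup_{\stra_1}\inf_{\stra_2}$ yields the claimed upper bound.

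For Part~2, I would fix the memoryless uniform randomized strategy $\pi_1$ for player~1, turning $\G$ into an MDP with player~2 as the sole controller. A standard MDP result states that, under any player~2 strategy, the induced Markov chain almost surely enters some bottom strongly connected component, which (being player~1 closed and strongly connected) is an end-component $U$ of $\Unif(\G)$, and every vertex of $U$---hence every label in $\bigcup_{u\in U}\lab(u)$---is visited infinitely often almost surely. Using controllable recurrence of $\G$, the chain has uniformly positive probability of returning to $v_{\init}$ from every reachable vertex, so $v_{\init}$ itself is visited infinitely often a.s., forcing $v_{\init}\in U$ and thus $U\in\mathcal{E}_0$. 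Letting player~2 choose its strategy to steer the play into the end-component $U^*\in\mathcal{E}_0$ with smallest label union, set $V_{\min}=|\bigcup_{u\in U^*}\lab(u)|$; then $\pi_1$ against this $\stra_2^*$ achieves exactly coverage $V_{\min}$ almost surely.

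The remaining step is to transfer this randomized guarantee into a lower bound on the deterministic value $W$. I would invoke Borel determinacy: the event ``at least $V_{\min}$ labels are visited'' is a Borel set of paths, so Martin's theorem yields a deterministic winner. If player~2 had a deterministic strategy $\stra_2^\dagger$ forcing coverage strictly below $V_{\min}$ against every $\stra_1$, then playing $\stra_2^\dagger$ against $\pi_1$ would yield coverage $<V_{\min}$ on every positive-probability path, contradicting the MDP analysis. Hence player~1 must have a deterministic strategy achieving $\geq V_{\min}$, so $W\geq V_{\min}=|\bigcup_{u\in U^*}\lab(u)|$, establishing Part~2 with $U=U^*$. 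The main obstacle I expect is precisely this derandomization: the MDP analysis produces only a randomized, almost-sure guarantee, and directly building a memoryful deterministic strategy matching it is awkward because coverage-optimal deterministic play generally requires memory; routing through Borel determinacy avoids the explicit construction entirely.
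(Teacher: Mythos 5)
Your Part~1 coincides with the paper's: fix the memoryless player~2 strategy that stays inside $U$ (possible by strong connectedness), note that player-1 closure keeps every play in $U$, and conclude. Your Part~2 is correct but proceeds by a genuinely different route. The paper's argument is elementary and constructive: player~1 plays an explicit iterative strategy (force a visit to a not-yet-covered proposition, return to $v_{\init}$ by controllable recurrence, repeat); when this gets stuck, the set $U^*$ of vertices from which player~1 can no longer reach anything new is a player-2 trap, and a bottom strongly connected component of the player-2-restricted subgraph of $U^*$ is the witness end component. You instead fix the uniform randomized strategy $\pi_1$, invoke the almost-sure end-component theorem for MDPs to conclude that the limit set of any play is an end component, use controllable recurrence (via a uniform positive-probability bound on returning to $v_{\init}$) to force $v_{\init}$ into that end component, and then derandomize with determinacy of the open objective ``at least $V_{\min}$ propositions are visited.'' Both arguments are sound, and both silently use controllable recurrence even though the lemma statement omits it (without it Part~2 can fail for want of any end component containing $v_{\init}$). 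The trade-off: the paper's proof directly exhibits both the witness end component and a deterministic player-1 strategy, using nothing beyond attractor computations; yours imports heavier machinery (the a.s.\ end-component theorem and a determinacy argument) and yields the deterministic strategy only non-constructively, but it avoids the explicit reach-and-return construction and arguably matches the randomized-strategy philosophy the paper advertises in its abstract. Two small points of precision: for history-dependent player-2 strategies the induced process is not a finite Markov chain, so you should cite the end-component theorem rather than ``bottom SCCs''; and Martin's Borel determinacy is overkill, since your winning condition is open (indeed a reachability condition on the product with $2^{\AP}$), so Gale--Stewart or finite-graph reachability determinacy suffices.
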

\begin{proof} We prove both the claims below.
\begin{enumerate}
\item If $U$ is an end-component in $\Unif(G)$, then 
consider a memoryless strategy $\stra_2^*$ 
for player~2, that for all vertices $u \in U \cap V_2$, 
chooses a successor $u' \in U$ (such a 
successor exists since $U$ is strongly connected).
Since $U$ is player~1 closed (i.e., for all player~1 out-going 
edges from $U$, the end-point is in $U$), 
it follows that for all strategies of player~1 , given
the strategy $\stra_2^*$ for player~2, the vertices
visited in a play is contained in $U$.
The desired result follows.

\item An optimal strategy $\stra_1^*$ for player~1 in $\G$ 
is as follows:
\begin{enumerate}
\item Let $Z_0=\set{v\in V \mid \lab(v)=\lab(v_{\init})}$ and 
$i=0$;
\item At iteration $i$, let $Z_i$ represents the set of propositions already
visited.
At iteration $i$, player~1 plays a strategy to reach 
a state in $V \setminus Z_i$ (if such a strategy exists), and then 
reaches back $v_{\init}$ (a strategy to reach back $v_{\init}$ always
exists since the game is controllably recurrent).

\item If a new proposition $p_i$ is visited at iteration $i$, then 
let $Z_{i+1}=Z_i \cup \set{v \in V \mid \lab(v)=p_i}$.
Goto step (b) for $i+1$ iteration with $Z_{i+1}$.
If no state in $V \setminus Z_i$ can be reached, then stop.
\end{enumerate}
The strategy $\stra_1^*$ is optimal, and let the above iteration
stop with $Z_i=Z^*$.
Let $X=V \setminus Z^*$, and let $X^*$ be the set of vertices
such that player~1 can reach $X$. 
Let $U^*= V \setminus X^*$.
Then $v_{\init} \in U^*$ and player~2 can ensure that from $v_{\init}$
the game can be confined to $U^*$.
Hence the following conditions must hold:
(a)~for all $u \in U^* \cap V_2$, there exists $u' \in U^*$ such 
that $(u,u')\in E$; and
(b)~for all $u \in U^* \cap V_1$, for all $u'\in V$ such that
$(u,u') \in E$ we have $u'\in U^*$.
Consider the sub-graph $G'$ where player~2 restricts itself to edges
only in $U^*$.
A bottom maximal strongly connected component $U \subseteq U^*$ 
in the sub-graph is an end-component in $\Unif(\G)$, and 
we have 
\[
|\bigcup_{u \in U} \lab(u) | 
\leq  |\bigcup_{u \in U^*} \lab(u) | 
\leq  |\bigcup_{u \in Z^*} \lab(u) |.
\] 
It follows that $U$ is a witness end-component to prove the result. 

\end{enumerate}
The desired result follows.
\qed
\end{proof}

\begin{theorem}\label{thrm:maxexpl-contrgames}
The maximal coverage problem for labeled game graphs that are 
controallably recurrent is coNP-complete.
\end{theorem}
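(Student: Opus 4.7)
The proof has two parts: membership in coNP via the end-component characterization, and coNP-hardness via a reduction from UNSAT.

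For membership, I would invoke Lemma~\ref{lemm-contr-mdp}: parts~(1) and~(2) together characterize the value $\sup_{\stra_1 \in \Stra_1}\inf_{\stra_2 \in \Stra_2}|\lab(\pat(v_{\init},\stra_1,\stra_2))|$ as exactly the minimum of $|\bigcup_{u \in U}\lab(u)|$ over end-components $U$ of $\Unif(\G)$ with $v_{\init} \in U$. Hence ``maxcov $\geq m$'' fails iff some end-component $U \ni v_{\init}$ has $|\bigcup_{u \in U}\lab(u)| < m$, and this complementary problem is in NP: a verifier guesses $U \subseteq V$ and checks in polynomial time that (i)~$v_{\init} \in U$, (ii)~$U$ is player-1-closed, (iii)~$U$ is strongly connected in the underlying graph, and (iv)~the label count is below $m$. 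Thus the problem sits in coNP.

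For hardness I would reduce from UNSAT. Given a CNF formula $\Phi = C_1 \wedge \cdots \wedge C_m$ over variables $x_1,\ldots,x_n$, I build a controllably recurrent labeled game graph $\G^\Phi$ as follows. Take $v_{\init}$ to be a player~1 hub with edges to \emph{clause-check entries} $c_1,\ldots,c_m$ (also player~1). For each variable $x_j$ introduce one shared player~2 \emph{query} vertex $q_j$ with two successors $T_j$ and $F_j$ (player~1). For a clause $C_i$ with ordered literals $\ell_{i,1},\ldots,\ell_{i,k_i}$ over variables $j_{i,1},\ldots,j_{i,k_i}$, add $c_i \to q_{j_{i,1}}$ and, at each position $l$, give the \emph{falsifying} successor of $q_{j_{i,l}}$ (namely $F_{j_{i,l}}$ if $\ell_{i,l}$ is positive, else $T_{j_{i,l}}$) an edge to $q_{j_{i,l+1}}$ when $l<k_i$ and to a fresh player~1 vertex $p_i$ when $l=k_i$, while the \emph{satisfying} successor gets an edge back to $v_{\init}$; finally add $p_i \to v_{\init}$. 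Only $p_1,\ldots,p_m$ are labeled, by distinct propositions $\pi_1,\ldots,\pi_m$. The construction is polynomial in $\Phi$, and controllable recurrence is automatic: every play returns to $v_{\init}$ within $\sum_i k_i$ steps, since each non-abort transition monotonically advances some clause's literal position.

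The central claim is that the value of $\G^\Phi$ equals $\min_\alpha|\{i : \alpha \not\models C_i\}|$ over all truth assignments. For each $\alpha$, let $U_\alpha$ be the set of vertices reachable from $v_{\init}$ when player~2 is pinned at each $q_j$ to the successor designated by $\alpha(j)$. A direct check shows $U_\alpha$ is an end-component, and $p_i \in U_\alpha$ iff the linear check of $C_i$ follows the falsifying successor at every literal, iff every literal of $C_i$ is false under $\alpha$, iff $C_i$ is unsatisfied by $\alpha$. Conversely, for any end-component $U \ni v_{\init}$, defining $\alpha_U$ by $\alpha_U(j)=T$ if $T_j \in U$ and $\alpha_U(j)=F$ otherwise, an induction on path length using player-1 closure yields $U_{\alpha_U} \subseteq U$, so $|\bigcup_{u \in U}\lab(u)| \geq |\bigcup_{u \in U_{\alpha_U}}\lab(u)|$; the minimum is therefore attained by some $U_\alpha$. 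It follows that the value of $\G^\Phi$ is $\geq 1$ iff every assignment leaves some clause unsatisfied, iff $\Phi$ is unsatisfiable, giving a polynomial-time reduction from UNSAT with threshold $m=1$. The step I expect to be the main obstacle is formalizing this end-component-to-assignment correspondence cleanly, in particular ruling out ``mixed'' end-components that contain both $T_j$ and $F_j$ for some $j$ as ever being strictly better for player~2 than every single-assignment $U_\alpha$.
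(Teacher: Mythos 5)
Your coNP-membership argument is correct and is essentially the paper's: both use Lemma~\ref{lemm-contr-mdp} to characterize the game value as the minimum of $|\bigcup_{u\in U}\lab(u)|$ over end-components $U$ of $\Unif(\G)$ containing $v_{\init}$, and then guess-and-check a small end-component as the NO-witness. The hardness half, however, has a genuine gap. The paper reduces from the complement of Vertex Cover (player~1 cycles through edge-gadgets, player~2 picks an endpoint of each edge, and the guaranteed coverage equals the minimum vertex cover size plus one); you instead reduce from UNSAT, and your gadget is unsound.

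The problem is the sharing of the query successors $T_j,F_j$ across clauses. Because $T_j$ (resp.\ $F_j$) is a single player-1 vertex, it accumulates one outgoing edge for \emph{every} occurrence of $x_j$ in the formula: an edge to $v_{\init}$ for each occurrence where it is the satisfying successor, and an edge into the \emph{middle} of some other clause's check (or to some $p_{i'}$) for each occurrence where it is the falsifying successor. Player~1 chooses among these, so after entering the check of clause $C_i$ and reaching $T_j$ or $F_j$, player~1 may splice into the check of a different clause $C_{i'}$ at an interior position, skipping all earlier literals of $C_{i'}$. Concretely, take $\Phi=(x_1\vee\neg x_2)\wedge(x_2)$, which is satisfiable by $x_1=x_2=\mathit{true}$. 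In your graph, $T_2$ has an edge to $p_1$ (it is the falsifying successor for the literal $\neg x_2$ at position~2 of $C_1$) and $F_2$ has an edge to $p_2$ (falsifying successor for $C_2$). Player~1 plays $v_{\init}\to c_2\to q_2$; whichever of $T_2,F_2$ player~2 chooses, player~1 reaches $p_1$ or $p_2$. So the value is at least $1$ although $\Phi$ is satisfiable, and your claimed equivalence (value $\geq 1$ iff $\Phi$ unsatisfiable) fails. The same defect invalidates the end-component analysis: even the ``pure'' set $U_\alpha$ contains $p_1$ here, although $\alpha$ satisfies $C_1$, so the issue is not the ``mixed'' end-components you flagged but spurious mid-clause entry points. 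Duplicating the query gadgets per clause does not repair this, since then nothing forces player~2 to answer consistently across clauses. If you want a working hardness proof, the paper's co-Vertex-Cover reduction avoids all of this because each excursion from $v_{\init}$ is a two-step edge gadget with no shared intermediate choice vertices.
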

\begin{proof}
We prove the following two claims to establish the result.
\begin{enumerate} 

\item \emph{In coNP.} The fact that the problem is in coNP can be proved
using Lemma~\ref{lemm-contr-mdp}.
Given a labeled game graph $\G$, 
if the answer to the maximal coverage problem 
(i.e., whether $\sup_{\stra_1 \in \Stra_1} 
\inf_{\stra_2 \in \Stra_2} |\lab(\pat(v_{\init},\stra_1,\stra_2))|\geq m$) 
is NO, then by Lemma~\ref{lemm-contr-mdp}, there exists 
an end-component $U$ in $\Unif(\G)$ such that 
$|\bigcup_{u \in U} \lab(u)| < m$.
The witness end-component $U$ is a polynomial witness and it can be 
guessed and verified in polynomial time.
The verification that $U$ is the correct witness is as follows:
we check 
(a)~$U$ is strongly connected;
(b)~for all $u \in U \cap V_1$ and for all $u'\in V$ such that 
$(u,u')\in E$ we have $u'\in U$; and
(c)~$|\bigcup_{u\in U} \lab(u) | < m$.
Hence the result follows.

\item \emph{coNP hardness.} 
We prove hardness using a reduction from the complement of 
the \emph{Vertex Cover} problem.
Given a graph $G=(V,E)$, a set $U \subseteq V$ is a \emph{vertex cover} if
for all edges $e=(u_1,u_2)\in E$ we have either $u_1 \in U$  or $u_2 \in U$.
Given a graph $G$ whether there is a vertex cover $U$ of size at most 
$m$ (i.e., $|U| \leq m$) is NP-complete~\cite{GareyJohnson}.
We now present a reduction of the complement of the vertex cover 
problem to the maximal coverage problem in controallably recurrent games.
Given a graph $G=(V,E)$ we construct a labeled game graph $\ov{\G}$ as follows.
Let the set $E$ of edges be enumerated as $\set{e_1,e_2,\ldots,e_{\ell}}$,
i.e., there are $\ell$ edges.
The labeled game graph $\ov{\G}=((\ov{V},\ov{E}),(\ov{V}_1,\ov{V}_2), v_{\init},\AP, \lab)$ 
is as follows.
\begin{enumerate}
\item \emph{Vertex set and partition.} The vertex set $\ov{V}$ is as follows:
\[
\ov{V}=\set{v_{\init}}\cup E \cup \set{e_i^j \mid 1 \leq i \leq \ell, 1 \leq j \leq 2}.
\]
All states in $E$ are player~2 states, and the other states are player~1 states,
i.e., $\ov{V}_2=E$, and $\ov{V}_1=\ov{V}\setminus \ov{V}_2$.

\item \emph{Edges.} The set $\ov{E}$ of edges are as follows:
\[
\begin{array}{rcl}
\ov{E} & = & \set{(v_{\init},e_j) \mid 1 \leq j \leq \ell} 
\ \cup \ \set{(e_i,e_i^j) \mid 1 \leq i \leq \ell, 1 \leq j \leq 2} \\
& & \cup \set{(e_i^j,v_{\init}) \mid 1 \leq i \leq \ell, 1 \leq j \leq 2}.
\end{array}
\]
Intuitively, the edges in the game graph are as follows: from the initial
vertex $v_{\init}$, player~1 can choose any of the edges $e_i \in E$.
For a vertex $e_i$ in $\ov{V}$, player~2 can choose between two vertices
$e_i^1$ and $e_i^2$ (which will eventually represent the two end-points of
the edge $e_i$).
From vertices of the form $e_i^1$ and $e_i^2$, for $1\leq i \leq \ell$, the
next vertex is the initial vertex $v_{\init}$.
It follows that from all vertex the game always comes back to $v_{\init}$ and
hence we have controllably recurrent game.

\item \emph{Propositions and labelling.} $\AP=V \cup\set{\$ \mid \$ \not \in V}$,
i.e., there is a proposition for every vertex in $V$ and a special proposition $\$$.
The vertex $v_{\init}$ and vertices in $E$ are labeled by the  special 
proposition $\$$, i.e., 
$\lab(v_{\init})=\$$; and for all $e_i \in E$ we have $\lab(e_i)=\$$.
For a vertex $e_i^j$, let $e_i=(u_i^1,u_i^2)$, where $u_i^1,u_i^2$ are vertices
in $V$,
then $\lab(e_i^1)=u_i^1$ and $\lab(e_i^2)=u_i^2$.
Note that the above proposition assignment ensures that at every vertex that
represents an edge, player~2 has the choices of vertices that form the end-points
of the edge.

\end{enumerate}

The following case analysis completes the proof.

\begin{itemize}
\item Given a vertex cover $U$, consider a player~2 strategy, that at a 
vertex $e_i \in \ov{V}$, choose a successor $e_i^j$ such that 
$\lab(e_i^j) \in U$.
The strategy for player~2 ensures that player~1 visits only propositions
in $U \cup \set{ \$ }$, i.e., at most $|U| + 1$ propositions.

\item Consider a strategy for player~1 that from $v_{\init}$ visits all 
states $e_1, e_2,\ldots,e_{\ell}$ in order.
Consider any counter-strategy for player~2 and let $U \subseteq V$ be 
the set of propositions other than $\$$ visited.
Since all the edges are chosen, it follows that $U$ is a vertex cover.
Hence if all vertex cover in $G$ is of size at least $m$, then 
player~1 can visit at least $m+1$ propositions. 
\end{itemize}
Hence there is a vertex cover in $G$ of size at most $m$ if and 
only if the answer to the maximal coverage problem in $\G$ with 
$m+1$ is NO.
It follows that the maximal coverage problem in controllably recurrent games
is coNP-hard.
\end{enumerate}
The desired result follows.
\qed
\end{proof}

\noindent{\bf Complexity of minimal safety games.} 
As a corollary of the proof of Theorem~\ref{thrm:maxexpl-contrgames} 
we obtain a complexity result about \emph{minimal safety games}.
Given a labeled game graph $\G$ and $m$, the minimal safety game problem
asks, whether there exists a set $U$ such that a player can 
confine the game in $U$ and $U$ contains at most $m$ propositions.
An easy consequence of the hardness proof of 
Theorem~\ref{thrm:maxexpl-contrgames} is minimal safety games
are NP-hard, and also it is easy to argue that minimal safety games
are in NP.
Hence we obtain that the minimal safety game problem is NP-complete.

\section{The Complexity of Coverage in Bounded Time Problem}
In this section we study the complexity of the coverage in bounded 
time problem.
In subsection~\ref{subsec:bouexpl:graphs} we study the complexity for 
graphs,
and in subsection~\ref{subsec:bouexpl:games} we study the complexity for 
game graphs.

\subsection{Graphs}
\label{subsec:bouexpl:graphs}

\begin{theorem}\label{thrm:bouexpl-graphs}
The coverage in bounded time problem for both labeled graphs and 
controllably recurrent labeled graphs is NP-complete.
\end{theorem}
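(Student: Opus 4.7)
My plan splits into two NP-hardness reductions plus a membership argument that feeds into both.

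First, for NP membership I would establish a short-witness lemma: whenever some path satisfies $|\lab(\pat \obciach k)| \geq m$, there is in fact a walk in $\G$ starting at $v_{\init}$ of length at most $\min(k, m \cdot |V|)$ whose vertices already cover $m$ propositions. The argument is the cycle-excision trick familiar from the proof of Theorem \ref{thrm:maxexpl-graphs}: take any qualifying prefix and iteratively delete cycles that introduce no fresh proposition; what remains can be carved into at most $m$ productive segments of length less than $|V|$. Since $m \leq |\AP|$, this yields a polynomial-size certificate, which a nondeterministic machine can guess and verify by rechecking the edges and computing $|\lab(\cdot)|$ directly. As an immediate corollary, bounded-time coverage on arbitrary graphs inherits NP-hardness from Theorem \ref{thrm:maxexpl-graphs}: the identity map on the graph, paired with $k := m \cdot |V|$, is a polynomial reduction of maximal coverage to bounded-time coverage.

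The real work is hardness in the controllably recurrent case, since Theorem \ref{thrm:maxexpl-graphs} is unavailable there (maximal coverage on such graphs is in PTIME). I would reduce directly from SAT, adapting the graph $\G^{\Phi}$ of the proof of Theorem \ref{thrm:maxexpl-graphs}: delete the absorbing self-loop at $x_{n+1}$ and install an edge $x_{n+1} \to x_1$, so that every reachable vertex can return to $v_{\init}$ and the graph becomes controllably recurrent. The risk is that, with enough time, the tester could traverse both the ``true'' and the ``false'' line of every variable and attain trivial coverage, so the step bound has to be calibrated to permit exactly one complete pass. I would first pad the shorter of the two alternative lines from each $x_j$ to $x_{j+1}$ with filler vertices labeled $X$, making every complete traversal $x_1 \leadsto x_{n+1}$ of identical length $L$, and then set $k := L$. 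Within $k$ steps the tester can finish exactly one pass, committing to a single assignment of every variable; and since the only edge back into $x_1$ leaves from $x_{n+1}$, no combination of partial passes fits another full line. Hence the number of propositions visited equals the number of clauses satisfied by the chosen assignment plus one (for $X$), and $\Phi$ is satisfiable iff the tester can achieve coverage $m+1$ in $k$ steps.

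The main obstacle I anticipate is precisely this tight calibration of $k$: it must simultaneously admit one complete traversal and forbid a second. The uniform-length padding resolves it cleanly because it reduces the ``does one full pass fit, and only one'' question to a single arithmetic check; once that is in place, the forward-and-backward correspondence between assignments and paths is the same as in the proof of Theorem \ref{thrm:maxexpl-graphs}, and the reduction goes through essentially verbatim.
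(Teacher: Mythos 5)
Your proposal is correct, but your hardness argument takes a genuinely different route from the paper's. The paper proves NP-hardness with a single, very short reduction from Hamiltonian path: take the input digraph itself, give every vertex its own unique proposition, and set $k=m=n=|V|$; then a qualifying prefix must pick up a fresh proposition at every single step, which is exactly a Hamiltonian path from $v$. You instead give two reductions: for general graphs you observe that Theorem~\ref{thrm:maxexpl-graphs} transfers directly once the short-witness lemma fixes $k:=m\cdot|V|$, and for controllably recurrent graphs you rebuild the SAT gadget $\G^{\Phi}$ with a back edge $x_{n+1}\to x_1$ and pad the two alternative lines of each variable to equal length so that $k$ can be calibrated to permit exactly one pass. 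Your route is longer but it buys something the paper's write-up is silent about: the Hamiltonian-path instance is not controllably recurrent as stated, so the paper's proof technically only covers the general case and needs a (small, easy) patch --- e.g.\ adding edges from every vertex back to $v$, which is harmless because with $k=m=n$ a revisited vertex already disqualifies the prefix. Your SAT-based construction handles recurrence by design, and as a bonus it inherits the hardness-of-approximation remark made after Theorem~\ref{thrm:maxexpl-graphs}. The paper's reduction, in exchange, is a two-line argument that exploits the time bound itself as the source of hardness rather than the proposition partition. Both membership arguments are the same cycle-excision certificate.
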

\begin{proof}
We prove the completeness result in two parts below.
\begin{enumerate}
\item \emph{In NP.} Given a labeled graph with $n$ vertices, if there a path $\pat$ such 
that $|\lab(\pat \obciach k)| \geq m$, then there is path $\pat'$ such 
that $|\lab(\pat' \obciach m\cdot n)| \geq m$. 
The above claim follows since any cycle that does not visit any new proposition
can be omitted. 
Hence a path of length $j=\min(k,m\cdot n)$ can be guessed and it can be
then checked in polynomial time if the path of length $j$ visits at least 
$m$ propositions.

\item \emph{In NP-hard.} We reduce the \emph{Hamiltonian-path (HAM-PATH)}~\cite{GareyJohnson} 
problem to the coverage in bounded time problem for labeled graphs. 
Given a directed graph $G=(V,E)$ and an initial vertex $v$,
we consider the labeled graph $\G$ with the directed graph $G$, with
$v$ as the initial state and $\AP=V$ and $\lab(u)=u$ for all $u\in V$,
i.e., each vertex is labeled with an unique proposition.
The answer to the coverage is bounded time with $k=n$ and $m=n$, 
for $n=|V|$ is ``YES'' iff there is a HAM-PATH in $G$ starting from $v$.
\end{enumerate}
The desired result follows.
\qed
\end{proof}

\smallskip\noindent{\bf Complexity in size of the graph.} 
We now argue that the maximal coverage and the coverage in
bounded time problem on labeled graphs
can be solved in non-deterministic log-space in the size of 
the graph, and polynomial space in the size of the atomic propositions.
Given a labeled graph $\G$, with $n$ vertices, we argued in 
Theorem~\ref{thrm:maxexpl-graphs} that if $m$ propositions 
can be visited, then there is a path of length at most 
$m\cdot n$, that visits $m$ propositions.
The path of length $m\cdot n$, can be visited, storing the 
current vertex, and guessing the next vertex, can checking
the set of propositions already visited.
Hence this can be achieved in non-deterministic log-space in
the size of the graph, and polynomial space in the size of 
the proposition set.
A similar argument holds for the coverage in bounded time problem.
This gives us the following result.

\begin{theorem}\label{thrm:nlogmaxexpl}
Given a labeled graph $\G=((V,E),v_{\init},\AP,\lab)$, 
the maximal coverage problem and the coverage in bounded time problem 
can be decided in NLOGSPACE in $|V|+|E|$, and in PSPACE in $|\AP|$.
\end{theorem}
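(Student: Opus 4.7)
The plan is to describe an explicit nondeterministic algorithm that, at every point in time, maintains only (a)~the current vertex, (b)~a counter for the number of steps taken so far, and (c)~a bit-vector over $\AP$ recording which propositions have been seen. I will then measure the space usage of each of these items separately in $|V|+|E|$ and in $|\AP|$.

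First, I would invoke the length bound already established inside the proof of Theorem~\ref{thrm:maxexpl-graphs}: if there is any path that visits $m$ propositions, then there is such a path of length at most $m\cdot n$ (cycles that do not contribute a new proposition can be spliced out). For the bounded-time variant we use the bound $j=\min(k,m\cdot n)$; note that we never need to store $k$ in full since we can stop as soon as either $k$ steps or $m\cdot n$ steps have elapsed. Given this bound, I would spell out the standard guess-and-check NLOGSPACE procedure: start at $v_{\init}$, repeatedly guess a neighbor $v'$ of the current vertex $v$, verify that $(v,v')\in E$ (which can be done by scanning the edge list using $O(\log(|V|+|E|))$ space), update the current vertex, increment the step counter, and update the proposition bit-vector by OR-ing in $\lab(v')$. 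We accept as soon as the bit-vector has weight at least $m$.

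I would then total the space:  storing the current and candidate vertices costs $O(\log|V|)$; verifying edges costs $O(\log(|V|+|E|))$; the step counter is bounded by $m\cdot n\le |\AP|\cdot|V|$, so its binary representation takes $O(\log|V|+\log|\AP|)$ bits; and the bit-vector over $\AP$ takes exactly $|\AP|$ bits. Hence the total space is $O(\log(|V|+|E|))$ as a function of the graph size, and $O(|\AP|)$ as a function of the proposition set. By Savitch's theorem this nondeterministic procedure already runs in (deterministic) PSPACE in $|\AP|$, and it is by construction NLOGSPACE in $|V|+|E|$.

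I do not anticipate a substantial obstacle; the only point that requires a line of care is ensuring that the step counter is small enough in both measures simultaneously, and that the verification of edges is done by streaming through the input rather than copying the graph into the worktape. The argument for the bounded-time version is essentially identical, replacing the termination condition ``weight of bit-vector $\ge m$'' with ``weight $\ge m$ within the first $\min(k,m\cdot n)$ steps''.
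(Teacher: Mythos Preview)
Your proposal is correct and follows essentially the same approach as the paper: invoke the $m\cdot n$ path-length bound from Theorem~\ref{thrm:maxexpl-graphs}, then guess the path one vertex at a time while storing only the current vertex and the set of propositions already seen. Your write-up is in fact more detailed than the paper's own argument, which gives only a two-sentence sketch; the explicit space accounting for the step counter and the edge-verification, and the remark on $\min(k,m\cdot n)$ for the bounded-time variant, are useful additions but do not change the underlying method.
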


\subsection{Game graphs}\label{subsec:bouexpl:games}

\begin{theorem}\label{thrm:pspace-bouexpl}
The coverage in bounded time problem for labeled game graphs 
is PSPACE-complete.
\end{theorem}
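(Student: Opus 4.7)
The plan is to show PSPACE-completeness by establishing the upper and lower bounds separately.

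For PSPACE membership, my first step is to reduce to the case that $k$ is polynomially bounded. Specifically, I would argue that one may assume without loss of generality that $k \le m \cdot |V|$, re-using the reasoning already sketched in the proof of Theorem~\ref{thrm:maxexpl-games}: if player~1 has any strategy that ensures visiting $m$ propositions, there is one that achieves this within $m \cdot |V|$ steps, since between the visits to successive new propositions a memoryless sub-strategy of length at most $|V|$ suffices. Because increasing $k$ only helps player~1, we may clip $k$ down to $m \cdot |V|$ whenever the given $k$ is larger. Once $k$ is polynomial, I would run a standard alternating polynomial-time simulation of the game: the ATM configuration is the triple $(v, S, j)$ with current vertex $v$, visited-proposition set $S \subseteq \AP$, and remaining step budget $j$; it existentially branches over outgoing edges when $v \in V_1$, universally branches when $v \in V_2$, accepts as soon as $|S| \ge m$, and rejects if $j$ reaches $0$ first. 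The total running time is polynomial, placing the problem in $\mathrm{APTIME} = \mathrm{PSPACE}$.

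For PSPACE hardness, I would lift the QBF reduction from Theorem~\ref{thrm:maxexpl-games} essentially unchanged. The game graph $\G^{\Phi}$ built there has polynomially many vertices, and every play that runs through all the variable and clause gadgets from $x_1$ down to the absorbing state $x_{n+1}$ has length bounded by some polynomial $K$ in $|\Phi|$. Setting the time bound $k = K$ and the coverage target $m+1$, we obtain that $\Phi$ is true iff player~1 can force at least $m+1$ propositions to be visited within $k$ steps in $\G^{\Phi}$. This yields a polynomial-time reduction from QBF, which is PSPACE-complete, to coverage in bounded time on game graphs.

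The main obstacle is making the polynomial clip on $k$ fully rigorous in the adversarial setting: naive path-pumping breaks because player~1's responses may depend on player~2's history, so cutting a cycle in one branch of the strategy tree may invalidate another branch. The cleanest way around this, which I would spell out, is to pass to the expanded game on states $(v, S)$ with $S \subseteq \AP$ the set of already-visited propositions, cast the task as reachability of the target set $\{(v,S) : |S| \ge m\}$, and use the fact that the reachability winner admits a memoryless strategy. Between successive augmentations of $S$ the second component is fixed, so the intermediate sub-game is a reachability game on only the $|V|$ vertices $\{(v, S) : v \in V\}$ and is therefore won, whenever winnable, in at most $|V|$ steps; summing over the at most $m$ augmentations yields the required $m \cdot |V|$ bound.
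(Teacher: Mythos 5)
Your proof is correct, and the hardness half coincides with the paper's (both reuse the QBF gadget of Theorem~\ref{thrm:maxexpl-games} with a polynomial cap on $k$). The membership half, however, takes a genuinely different route. The paper evaluates the minimax value by a depth-first traversal of the ``exploration game tree'' over labels $(v,b)$ with $b\subseteq\AP$ the propositions seen so far, and obtains polynomial depth by cutting a branch as soon as a label repeats an ancestor's label (assigning that leaf the value $|b|$); polynomial space then follows because only the current branch, of depth at most $1+|\G|\cdot(|\AP|+1)$, is stored. You instead first prove that the time bound $k$ may be clipped to $m\cdot|V|$ and then invoke $\mathrm{APTIME}=\mathrm{PSPACE}$ on the resulting polynomial-depth alternating simulation. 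The two are close in spirit (an alternating polynomial-time machine is simulated in PSPACE by exactly the kind of DFS the paper describes), but your justification of the depth bound is the stronger contribution: you correctly identify that naive cycle-removal on plays does not work against an adversary, and you repair it by passing to the expanded reachability game on pairs $(v,S)$, using memoryless winning strategies and the observation that between consecutive enlargements of $S$ the play lives in a subgame on only $|V|$ states, hence each enlargement is forced within $|V|$ moves and at most $m$ enlargements are needed. The paper's repetition cutoff, by contrast, silently assumes that truncating a branch at a repeated label $(v,b)$ and assigning it the pessimistic value $|b|$ does not change the root value, which is exactly the adversarial subtlety you flag; your attractor-based clipping supplies the argument that makes this (and the analogous step in Theorem~\ref{thrm:maxexpl-games}) rigorous. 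The price you pay is the reliance on the alternation characterization of PSPACE, where the paper's DFS is self-contained; the benefit is a cleaner and fully justified depth bound.
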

\begin{proof}
We prove the following two cases to prove the result.
\begin{enumerate}
\item \emph{PSPACE-hardness}. It follows from the proof of 
Theorem~\ref{thrm:maxexpl-games} that the maximal coverage problem
for labeled game graphs reduces to the coverage in bounded time problem
for labeled game graphs. 
Since the maximal coverage problem for labeled game graphs is 
PSPACE-hard (Theorem~\ref{thrm:maxexpl-games}), the result follows.

\item \emph{In PSPACE.}
We say that an {\em exploration game tree\/} for a labeled game graph
is a rooted, labeled tree which represents an unfolding of the graph. 
Every node $\alpha$ of the tree is labeled with a pair $(v, b)$, where 
$v$ is a node of the game graph, and $b \subs \AP$ is the set of
propositions that have been visited in a branch leading from the root
of the tree to $\alpha$. 
The root of the tree is labeled with $(v_{\init}, \lab(v_{\init}))$. 
A tree with label $(v, b)$ has one descendant for each $u$ with 
$(v,u) \in E$; the label of the descendant is $(u, b \union \lab(u))$.

In order to check if $m$ different propositions can
be visited within $k$-steps, the PSPACE algorithm traverses the
game tree in depth first order.
Each branch is explored up to one of the two following conditions is
met: (i)~depth $k$ is reached, or 
(ii)~a node is reached, which has the same label as an ancestor in the
tree. 
The bottom nodes, where conditions (i) or (ii) are met, are thus the
leaves of the tree. 
In the course of the traversal, the algorithm computes in bottom-up
fashion the {\em value\/} of the tree nodes. 
The value of a leaf node labeled $(v, b)$ is $|b|$.
For player-1 nodes, the value is the maximum of the values of the
successors; for player-2 nodes, the value is the minimum of the value
of the successors. 
Thus, the value of a tree node $\alpha$ represents the minimum number of
propositions that player~1 can ensure are visited, in the course of a
play of the game that has followed a path from the root of the tree to
$\alpha$, and that can last at most $k$ steps. 
The algorithm returns Yes if the value at the root is at least $m$,
and no otherwise. 

To obtain the PSPACE bound, notice that if a node with label $(v,b)$
is an ancestor of a node with label $(v',b')$ in the tree, we have 
$b \subs b'$: thus, along a branch, the set of propositions appearing
in the labels increases monotonically. 
Between two increases, there can be at most $|\G|$ nodes, due to the
termination condition (ii). 
Thus, each branch needs to be traversed at most to depth 
$1 + |\G| \cdot (|\AP| + 1)$, and the process requires only polynomial
space. 
\end{enumerate}
The result follows.
\qed
\end{proof}

\begin{theorem}\label{thrm:bouexpl-contrgames}
The coverage in bounded time problem for labeled game graphs that are
controllably recurrent is both NP-hard and coNP-hard, and can be decided in
PSPACE.
\end{theorem}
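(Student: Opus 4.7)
The plan is to establish the three claims---PSPACE membership, NP-hardness, and coNP-hardness---separately. For the PSPACE upper bound, I would simply invoke Theorem~\ref{thrm:pspace-bouexpl}: the depth-first exploration-game-tree procedure given there uses no structural property of $\G$ beyond the definition of a labeled game graph, so it applies unchanged to the controllably recurrent sub-class.

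For coNP-hardness, I would reuse the Vertex-Cover reduction in the proof of Theorem~\ref{thrm:maxexpl-contrgames}. The labeled game graph $\ov{\G}$ produced there is already controllably recurrent, since from every state of the form $e_i^j$ there is a direct edge back to $v_{\init}$. Player~1's canonical strategy of visiting $e_1, e_2, \ldots, e_\ell$ in order consumes exactly $3\ell$ steps (three per edge: $v_{\init} \to e_i \to e_i^j \to v_{\init}$), so setting $k = 3\ell$ preserves the equivalence between the bounded-time coverage threshold and the existence of a small vertex cover in $G$. Player~2's optimal response is memoryless and hence insensitive to the time bound, so the original analysis transfers verbatim and yields coNP-hardness.

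For NP-hardness, I would adapt the Hamiltonian-path reduction of Theorem~\ref{thrm:bouexpl-graphs}. Given $(G=(V,E), v)$ with $|V|=n$, construct a labeled graph $\G'$ by taking $G$ and adjoining a chain of $n+1$ fresh vertices $w_1, w_2, \ldots, w_{n+1}$, together with edges $\{(u, w_1) : u \in V\}$, $\{(w_j, w_{j+1}) : 1 \leq j \leq n\}$, and $(w_{n+1}, v)$. Each $u \in V$ is given its own proposition, and each $w_j$ is labeled with $\lab(v)$ so the chain introduces no new propositions. The graph is controllably recurrent, since every reachable vertex can reach $v$ through the chain. With $k = n$ and $m = n$, I would claim that a path of length $n$ covering all $n$ labels of $V$ exists iff $G$ has a Hamiltonian path from $v$: the reverse direction is immediate; for the forward direction, once the path enters $w_1$ it is trapped in the chain for the remainder (returning to $v$ requires $n+2$ further steps, exceeding the budget), so at least $n$ of the path's $n+1$ positions must lie in $V$ and together cover all of $V$, which, as in Theorem~\ref{thrm:bouexpl-graphs}, is equivalent to the existence of a Hamiltonian path from $v$.

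The main obstacle is this last step: the reset chain must render the graph controllably recurrent without opening any shortcut that decouples bounded-time coverage from the Hamiltonian-path question. Making the chain strictly longer than $k$, and labeling every chain vertex with the initial-vertex proposition, are precisely the design choices that close this loophole; the rest of the argument is a routine adaptation of existing reductions.
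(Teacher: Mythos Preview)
Your plan matches the paper's structure: PSPACE membership via Theorem~\ref{thrm:pspace-bouexpl}, coNP-hardness from the Vertex-Cover construction of Theorem~\ref{thrm:maxexpl-contrgames}, and NP-hardness from the Hamiltonian-path reduction of Theorem~\ref{thrm:bouexpl-graphs}. The paper argues the two hardness claims more abstractly than you do. Instead of re-analysing the Vertex-Cover gadget with the specific bound $k=3\ell$, it observes once (as in the proof of Theorem~\ref{thrm:maxexpl-games}) that on any $n$-vertex game graph player~1 can force $m$ propositions iff she can do so within $m\cdot n$ steps; hence the maximal-coverage instance reduces to a bounded-time instance on the \emph{same} game graph, and controllable recurrence is preserved automatically. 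That single sentence transports the coNP-hardness of Theorem~\ref{thrm:maxexpl-contrgames} without revisiting the gadget. For NP-hardness the paper simply cites Theorem~\ref{thrm:bouexpl-graphs}, which already asserts NP-completeness for controllably recurrent labeled graphs; your explicit reset-chain construction is a welcome addition here, because the proof of Theorem~\ref{thrm:bouexpl-graphs} as written reduces from an arbitrary directed graph and does not itself exhibit a controllably recurrent instance.

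One small gap in your reverse direction: with $k=n$ the path has $n+1$ positions, and the case where all $n+1$ of them lie in $V$ (the path never enters the chain) is not covered by your argument. A walk $u_0,\dots,u_n$ in $V$ that hits all $n$ labels repeats exactly one vertex, but this does not by itself yield a Hamiltonian path from $v$; for instance with $V=\{v,a,b\}$, edges $(v,a),(a,v),(v,b)$, and your chain attached, the walk $v,a,v,b$ has length~$3$ and covers all three labels, yet $G$ has no Hamiltonian path from $v$. Taking $k=n-1$ instead fixes this cleanly: the path then has exactly $n$ positions, the chain is still too long to traverse within the budget, and covering all $n$ distinct labels forces the $n$ positions (all in $V$) to be pairwise distinct. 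The same off-by-one is present in the reduction stated in Theorem~\ref{thrm:bouexpl-graphs}, so this is an inherited issue rather than a flaw in your adaptation.
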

\begin{proof}
It follows from the (PSPACE-inclusion) argument of 
Theorem~\ref{thrm:maxexpl-games} that 
the maximal coverage problem for labeled game graphs that are 
controllably recurrent can be reduced to the coverage in bounded time 
problem for labeled game graphs that are controllably recurrent.
Hence the coNP-hardness follows from Theorem~\ref{thrm:maxexpl-contrgames}, 
and the NP-hardness follows from hardness in labeled graphs that are
controllably recurrent (Theorem~\ref{thrm:bouexpl-graphs}).
The PSPACE-inclusion follows from the general case of labeled
game graphs (Theorem~\ref{thrm:pspace-bouexpl}).
\qed
\end{proof}

Theorem~\ref{thrm:bouexpl-contrgames} shows that for controllably 
recurrent game graphs, the coverage in bounded time problem is both
NP-hard and coNP-hard, and can be decided in PSPACE. 
A tight complexity bound remains an open problem.

\medskip\noindent{\bf Complexity in the size of the game.}
The maximal coverage problem
can alternately be solved in time linear in the size of the game graph 
and exponential in the number of propositions.
Given a game graph $G = ((V,E), (V_1,V_2), v_{\init}, \AP,\lab)$, construct
the game graph $G' = ((V', E'), (V'_1,V'_2), v'_{\init}, \AP,\lab')$
where $V' = V \times 2^{\AP}$, $((v,b), (v',b'))\in E'$ iff
$(v,v')\in E$ and $b' = b\cup \lab(v')$, 
$V_i = \set{(v,b)\mid v\in V_i}$ for $i\in\set{1,2}$,
$v'_{\init} = (v_{\init}, \lab(v_{\init}))$, and $\lab'(v,b) = \lab'(v)$.
Clearly, the size of the game graph $G'$ is linear in $G$ and
exponential in $\AP$.
Now consider a reachability game on $G'$ with the goal 
$\set{(v,b)\mid v\in V\mbox{ and }|b|\geq m}$.
Player-1 wins this game iff the maximal coverage problem is true for
$G$ and $m$ propositions.
Since a reachability game can be solved in time linear in the game, the
result follows.
A similar construction, where we additionally track the length of 
the game so far, shows that the maximal coverage problem with bounded
time can be solved in time linear in the size of the game graph and
exponential in the number of propositions. 

\begin{theorem}\label{thrm:lintimemaxexpl}
Given a labeled game graph $\G=((V,E),(V_1,V_2),v_{\init},\AP,\lab)$
the maximal coverage and the coverage in bounded time problem 
can be solved in linear-time in $O(|V|+|E|)$ and in exponential time
in $|\AP|$.
\end{theorem}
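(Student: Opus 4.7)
The plan is to follow the product-construction sketch already outlined in the paragraph preceding the theorem and verify that the two resulting reachability games have the advertised size, then invoke a standard linear-time reachability game solver.

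First I would formalize the product construction. Define $G'=((V',E'),(V_1',V_2'),v'_{\init},\AP,\lab')$ with $V' = V \times 2^{\AP}$, edges $((v,b),(v',b')) \in E'$ iff $(v,v') \in E$ and $b' = b \cup \lab(v')$, partition $V_i' = \{(v,b) \mid v \in V_i\}$ for $i \in \{1,2\}$, initial vertex $(v_{\init},\lab(v_{\init}))$, and $\lab'(v,b) = \lab(v)$. A simple induction on the length of a play shows that the $b$-component of the current vertex equals $\lab(\pat \obciach j)$ after $j$ steps, so the projection of plays in $G'$ onto $V$ is exactly the set of plays in $G$, and conversely every play in $G$ lifts uniquely to a play in $G'$. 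Strategies transfer between $G$ and $G'$ in the same way.

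Next I would set up the reachability objective. Let $T = \{(v,b) \in V' \mid |b| \geq m\}$. By the bijection between plays, player~1 has a strategy in $G$ ensuring that at least $m$ propositions are visited iff player~1 has a winning strategy in the reachability game $(G',T)$. Reachability games on a graph of size $N$ are solvable in time $O(N)$ by the standard backward attractor computation, and here $|V'| + |E'| = O((|V|+|E|) \cdot 2^{|\AP|})$, giving the claimed linear-in-$|G|$, exponential-in-$|\AP|$ bound for the maximal coverage problem.

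For the coverage in bounded time problem I would refine the construction to $V'' = V \times 2^{\AP} \times \{0,1,\ldots,k\}$, with transitions incrementing the step component and saturating (or disabling outgoing edges) once the counter reaches $k$; the target set becomes $\{(v,b,j) \mid |b| \geq m, j \leq k\}$. If $k$ is given in unary, this is an additional polynomial factor; if $k$ is given in binary, we instead cap the counter at $\min(k, m \cdot |V|)$, which by the argument used in Theorem~\ref{thrm:pspace-bouexpl} and Theorem~\ref{thrm:bouexpl-graphs} suffices without loss of generality, so the counter adds only a polynomial blowup. Solving reachability on the resulting graph again gives linear time in $|V|+|E|$ and exponential time in $|\AP|$.

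The only subtlety, and the main thing to get right in the write-up, is the treatment of the time bound $k$ in the bounded-time variant: naively multiplying by $k+1$ would not give a bound that is linear in $|V|+|E|$ and merely exponential in $|\AP|$ when $k$ is given in binary, so one must first truncate to the polynomial bound $m \cdot |V|$ (justified by the cycle-removal argument used earlier) before forming the product. Everything else is a direct verification that strategies, plays, and winning conditions transfer faithfully through the product construction.
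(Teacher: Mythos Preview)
Your proposal is correct and follows essentially the same product-with-subset construction and reachability reduction that the paper sketches in the paragraph immediately preceding the theorem. Your explicit treatment of the step counter, including the truncation of $k$ to $\min(k, m\cdot |V|)$ via the cycle-elimination/layer argument, is a detail the paper glosses over with ``additionally track the length of the game,'' so your write-up is slightly more careful but not a different approach.
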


\section{Extensions}

Somewhat surprisingly, despite the central importance of graph
coverage in system verification, several basic complexity questions have 
remained open.
The basic setting of this paper on graphs and games can be extended in various directions,
enabling the modeling of other system features.
We mention two such directions.

\paragraph{Incomplete Information.}
So far, we have assumed that at each step, the tester has complete information about the
state of the system under test.
In practice, this may not be true, and the tester might be able to observe only a part
of the state.
This leads to graphs and games of {\em imperfect information} \cite{Reif84}.
The maximal coverage and the coverage in bounded time problem 
for games of imperfect information can be solved in EXPTIME.
The algorithm first constructs a perfect-information game graph by
subset construction~\cite{Reif84}, and then run the 
algorithm of Theorem~\ref{thrm:lintimemaxexpl}, that is linear in the size 
game graph and exponential in the number of propositions, on the perfect-information game
graph.  
Thus, the complexity of this algorithm is EXPTIME.
The reachability problem for imperfect-information games is 
already EXPTIME-hard \cite{Reif84}, hence we obtain an optimal
EXPTIME-complete complexity.

\paragraph{Timed Systems.}
Second, while we have studied the problem in the discrete, finite-state setting, similar 
questions can be studied for timed systems modeled as timed automata \cite{timedaut}
or timed game graphs \cite{MalerGames}.
Such problems would arise in the testing of real-time systems.
We omit the standard definitions of timed automata and timed games.
The maximal coverage  problem for timed automata 
(respectively, timed games) takes as input a timed automaton $T$ (respectively, a timed game $T$), 
with the locations labeled by a set $\AP$ of propositions, and
a number $m$, and asks whether $m$ different propositions can be visited.
An algorithm for the maximal coverage problem for timed automata constructs the region
graph of the automaton~\cite{timedaut} 
and runs the algorithm of Theorem~\ref{thrm:nlogmaxexpl} on the labeled region graph.
This gives us a PSPACE algorithm.
Since the reachability problem for timed automata is PSPACE-hard, we obtain 
a PSPACE-complete complexity.
Similar result holds for the coverage in bounded time problem for timed automata.
Similarly, the maximal coverage and coverage in bounded time problem for 
timed games can be solved in exponential time by running the 
algorithm of Theorem~\ref{thrm:lintimemaxexpl} on the region game graph.
This gives an exponential time algorithm.
Again, since game reachability on timed games is EXPTIME-hard \cite{hk99},
we obtain that maximal coverage and coverage in bounded time in 
timed games is EXPTIME-complete.

\end{document}